\newtheorem{theorem}{Theorem}[section]
\newtheorem{lemma}[theorem]{Lemma}
\newtheorem{corollary}[theorem]{Corollary}
\newtheorem{example}[theorem]{Example}
\newtheorem*{acknowledgement}{Acknowledgement}
\title{Expected Anomalies in the Fossil Record}
\author{Mareike Fischer and Mike Steel}
\address{Allan Wilson Centre for Molecular Ecology and Evolution\\
Biomathematics Research Centre \\University of Canterbury \\
Private Bag 4800 \\ Christchurch, New Zealand}
\date{}
\email{email@mareikefischer.de, m.steel@math.canterbury.ac.nz}
\subjclass{05C05; 92D15}
\keywords{Fossil record, null models, phylogenetic trees}
\begin{document}

\begin{abstract}
The problem of intermediates in the fossil record has been
frequently discussed ever since Darwin. The extent of `gaps'
(missing transitional stages) has been used to argue against gradual
evolution from a common ancestor. Traditionally, gaps have often
been explained by the improbability of fossilization and the
discontinuous selection of found fossils. Here we take an analytical
approach and demonstrate why, under certain sampling conditions, we
may not expect intermediates to be found.
 Using a simple null model, we show mathematically that the question of whether a taxon sampled from some
 time in the past is likely to be morphologically intermediate to other samples (dated earlier and later)
 depends on the shape and dimensions of the underlying phylogenetic tree that connects the taxa, and the times
 from which the fossils are sampled. 
\end{abstract}

\maketitle

Corresponding Author:

Mareike Fischer

Phone: +64-3-3667001, Ext. 7695

Fax: +64-3-3642587

Email: email@mareikefischer.de

\newpage

\section{Introduction}

Since Darwin's book {\em On the Origin of Species by Means of
Natural Selection, or the Preservation of Favoured Races in the
Struggle for Life} \cite{darwin}, there has been much debate about
the evidence for continuous evolution from a universal common
ancestor. Initially, Darwin only assumed the relatedness of the
majority of species, not of all of them; later, however, he came to
the view that because of the similarities of all existing species,
there could only be one `root' and one `tree of life' ({\em cf.}
\cite{sober}).   All species are descended from this common ancestor
and indications for their gradual evolution have been sought in the
fossil record ever since. Usually, the improbability of
fossilization or of finding existing fossils was put forward as the
standard answer to the question of why there are so many `gaps' in
the fossil record. Such gaps  have become popularly referred to as
`missing links', i.e. missing intermediates between taxa existing
either today or as fossils.

Of course, the existence of gaps is in some sense inevitable: every
new link gives rise to two new gaps, since evolution is generally a
continuous process whereas fossil discovery will always remain
discontinuous. Moreover, a patchy fossil record is not necessarily
evidence against evolution from a common ancestor through a
continuous series of intermediates -- indeed, in a recent approach,
Elliott Sober ({\em cf.} \cite{sober}) applied simple probabilistic
arguments to conclude that the existence of some intermediates
provides a stronger support for evolution than the non-existence of
any (or some) intermediates could ever provide for a hypothesis of
separate ancestry. Moreover, some lineages appear to be densely sampled, whereas of others only few fossiliferous horizons are known (cf. \cite{schoch}). This problem has been well investigated and statistical models have been developed to master it (see e.g. \cite{marshall_90_1}, \cite{marshall_90_2}), \cite{strauss_sadler}).

In this paper, we suggest a further argument that may help explain
missing links in the fossil record. Suppose that three fossils can
be dated back to three different times. Can we really expect that a
fossil from the intermediate time will appear (morphologically) to
be an `intermediate' of the other two fossils? We will explore this
question via a simple stochastic model.

In order to develop this model, we first state some assumptions we
will make throughout this paper: firstly, we will consider that we
are  sampling fossil taxa of closely related organisms and which differ in a number of morphological
characteristics. We assume this group of taxa has evolved in a
`tree-like' fashion from some common ancestor; that is, there is an
underlying phylogenetic tree, and the taxa are sampled from points
on the branches of this tree.

It is also necessary to say how morphological divergence might be
related to time, as this is important for deciding whether a taxon
is an intermediate or not. In this paper, we make the simplifying
assumption that, within the limited group of taxa under
consideration (and over the limited time period being considered),
the expected degree of morphological divergence between two taxa is
proportional to the total amount of evolutionary history separating
those two taxa. This evolutionary history is simply the time
obtained by adding together the two time periods from the most
recent common ancestor of the two taxa until the times from which
each was sampled (in the case where one taxon is ancestral to the
other, this is simply the time between the two samples). This
assumption on morphological diversity would be valid (in
expectation) if we view morphological distance as being proportional
to the number of discrete characters that two species differ on,
provided that two conditions hold: (i) each character has a constant
rate of character state change (substitution) over the time frame
$T$ that the fossils are sampled from,  and (ii) $T$ is short enough
that the probability of a reverse or convergent change at any given
character is low. We require these conditions to hold in the proofs of the following results. We will discuss other possible relations of
morphological diversification and distance towards the end of this
paper.

 \begin{figure}[ht] \vspace{1cm}
    \begin{minipage}{0.45\textwidth} \hspace{1cm}
            \begin{center}\includegraphics[width=4cm]{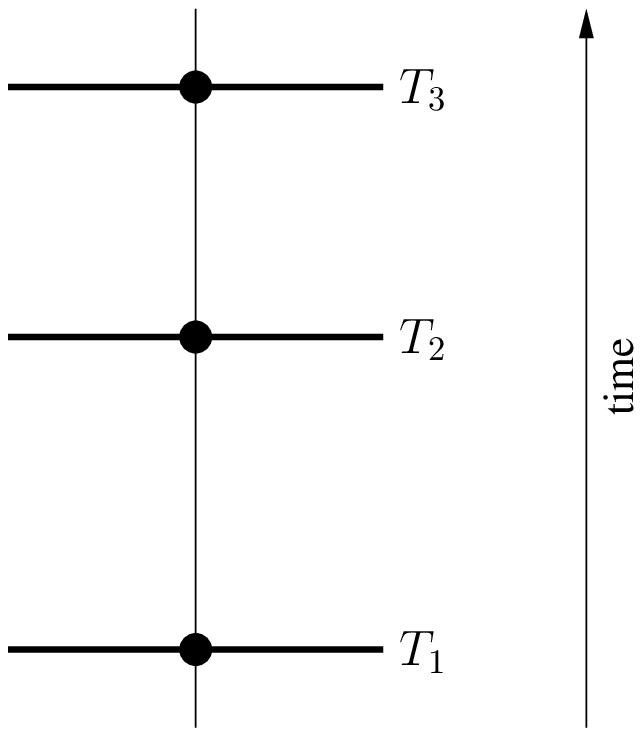}\end{center}
            \renewcommand*{\captionsize}{\footnotesize} \caption{\label{onebranch}
          When the tree consists of only one lineage from which samples are taken at times
          $T_1$, $T_2$ and $T_3$, then clearly the distance $d_{1,3}$ is always larger than $d_{1,2}$ and $d_{2,3}$. Consequently, $E_{1,3}>\max\{E_{1,2}, E_{2,3}\}$. }
    \end{minipage} \hfill
    \begin{minipage}{0.5\textwidth}
      \centering \hfill \vspace{0.3cm}
      \begin{center}\includegraphics[width=6.5cm]{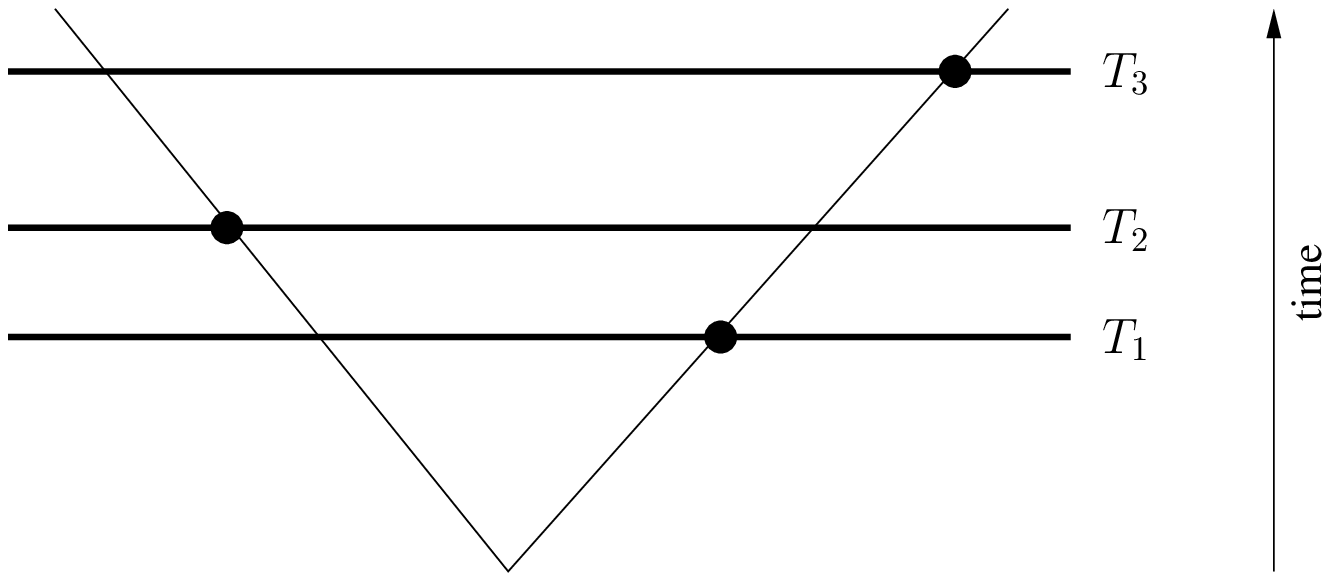}\end{center}
      \renewcommand*{\captionsize}{\footnotesize}
      \caption{ \label{twobranches} For samples taken from different lineages of a tree, the distance $d_{1,3}$ of one particular sample from time $T_1$ to the one of $T_3$ can be smaller than the distance of either of them to the sample taken at time $T_2$.
      Yet in expectation we always have $E_{1,3}>\max \{E_{1,2}, E_{2,3}\}$ for two-branch trees. For more complex trees this can
      fail as we show in Example~\ref{surprise}.
      }
         \end{minipage}
\end{figure}

The simplest scenario is the case where the three samples all lie on
the same lineage, so that the evolutionary tree can be regarded as a
path ({\em cf.} Figure \ref{onebranch}). In this case, the path
distance (and hence expected morphological distance) between the
outer two fossils is always larger than the distance that either of
them has from the fossil sampled from an intermediate time. But for
samples that straddle bifurcations in a tree, it is quite easy to
imagine how this intermediacy could fail; for example, if the two
outer taxa lie on one branch of the tree and the fossil from the
intermediate time lies on another branch far away ({\em cf.} Figure
\ref{twobranches}). But this example might be unlikely to occur, and
indeed we will see that if sampling is uniform across the tree at
any given time, in expectation the morphological distances remain
intermediate even for this case ({\em cf.} Figure
\ref{twobranches}). Yet for more complex trees, this expected
outcome can fail, and perhaps most surprisingly, the distance
between the earliest and latest sample can, in expectation, be  the
{\em smallest} of the three distances in certain extreme cases.

Thus, in order to make general statements, we will consider the
expected degree of relatedness of fossils sampled randomly from
given times. Our results will depend solely on the tree shape
(including branch lengths) of the underlying tree and the chosen
times.

\par

\section{Results}
We begin with some notation. Throughout this paper, we assume a
rooted binary phylogenetic tree to be given with an associated time
scale $0 < T_1 < T_2 < T_3$. The number of $T_i$-lineages (of
lineages extant at time $T_i$) is denoted by $n_i$. For instance, in
Figure \ref{ex1}, the number $n_1$ of $T_1$-lineages is $3$, whereas
the numbers $n_2$ and $n_3$ of $T_2$- and $T_3$-lineages are both
$5$. If not stated otherwise, extinction may occur in the tree.
Every bifurcation in the tree is denoted by $b_i$, where $b_0$ is
the root. Note that in a tree without extinction, the total number
of bifurcations up to time $T_3$ (including the root) is $n_3 - 1$.
For every $b_i$ let $t_i$ denote the time of the occurrence of
bifurcation $b_i$. We may assume that the root is at time $t_0=0$.
\par

Now, for every $b_i$, we make the following definitions:

\begin{equation*} P_{i}^{j,k}:=n_{j,i}^l \cdot n_{k,i}^r + n_{j,i}^r \cdot n_{k,i}^l \hspace{0.5cm} \mbox{for all} \hspace{0.5cm} j,k \in \{1,2,3\}, j \neq k \end{equation*}
where $n_{j,i}^l$ denotes the number of descendants the subtree with root $b_i$ has at time $T_j$ to the left of its root $b_i$, and $n_{j,i}^r$ is defined analogously for the descendants on the right hand side of $b_i$. \\

It can be seen that bifurcations for which at least one branch of
offspring dies out in the same interval where the bifurcation lies
always have $P_{i}^{j,k}$-value $0$. Consequently, if either $t_0 <
t_i < T_1$ or $T_1 < t_i < T_2$ or $T_2 < t_i < T_3$ and one of
$b_i$'s branches becomes extinct in the same interval, respectively,
then $P_{i}^{j,k}$ is $0$ for all $j,k$. Note that the number
$P_{i}^{j,k}$ denotes the number of different paths in the tree from
time $T_j$ to time $T_k$ in the subtree with root $b_i$ and in which
no edge is taken twice.

\par\vspace{0.5cm}
\begin{example}
Consider the tree given in Figure \ref{ex1}. Here, the values
$P_{i}^{j,k}$ for bifurcation $b_1$ corresponding to time $t_1$ are
$P_{1}^{1,2} = n_{1,1}^l \cdot n_{2,1}^r + n_{1,1}^r \cdot n_{2,1}^l
=1 \cdot 2 + 1 \cdot 1 = 3$, $P_{1}^{1,3} = 1 \cdot 3 + 1 \cdot 1 =
4$ and $P_{1}^{2,3} = 1 \cdot 3 + 2 \cdot 1 = 5$.
\par \vspace{0.5cm}
\begin{figure}
\center 
\includegraphics[width=13cm]{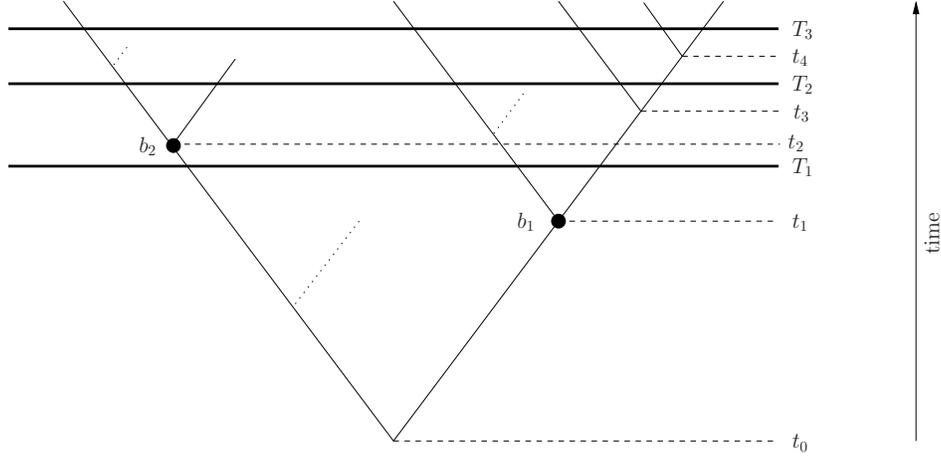}
\caption{\rm{A rooted binary phylogenetic tree with three times $T_1, T_2, T_3$ at
which taxa have been sampled. The dotted branches refer to taxa that
do not contribute to the expected distances from one of these times
to another and thus are not taken into account. On the other hand,
bifurcation $b_2$ at time $t_2$ shows that extinction may have an
impact on the expected values. Such branches have to be
considered.}} \label{ex1}
\end{figure}

\end{example}
\par \vspace{0.5cm}

In the sampling, select uniformly at random one of the
$T_i$-lineages as well as one of the $T_j$-lineages to get the
expected length $E_{i,j}$ of the path  connecting a lineage at time
$T_i$ with  one at time $T_j$ in the underlying phylogenetic tree.
Then, the expectation that a fossil from the intermediate time $T_2$
also will be an intermediate taxon of two taxa taken from $T_1$ and
$T_3$, respectively, refers to the assumption that $E_{1,3} >
\max\{E_{1,2},E_{2,3}\}$. We will show in the following lemma that
this last inequality can fail and describe the precise condition for
this to occur. Moreover, we later show that $E_{1,3}$ can be
strictly smaller (!) than both $E_{1,2}$ and $E_{2,3}$ - that is the
temporally most distant samples can, on average, be more similar
than the temporally intermediate sample is to either of the two. \\
Note that if $P_{i}^{j,k}$ is $0$, the corresponding branch does not
contribute to the expected distance from one time to another. We can
therefore assume without loss of generality that all bifurcations
$b_i$ have at least one descendant on their left-hand side and at
least one on their right-hand side, each in at least one of the
times $T_1,T_2,T_3$. In Figure \ref{ex1}, branches that therefore
need not be considered are represented with dotted lines. \par
\vspace{0.5cm}

In order to simplify the statement of our results, for all bifurcations $b_i$ set \begin{equation*} Q_{i}^{j,k}:=\frac{2\cdot P_{i}^{j,k}}{n_j n_k} \hspace{0.5cm} \mbox{for all} \hspace{0.5cm} j,k \in \{1,2,3\}, j \neq k. \end{equation*}

\begin{lemma} \label{lemma1} Given a rooted binary phylogenetic tree with times $0 < T_1 < T_2 < T_3$ and the root at time $t_0=0$. $\mbox{Then,} \hspace{0.2cm} E_{1,3} \leq E_{1,2} \hspace{0.2cm} \mbox{if and only if}$ \\ \begin{center} $T_3 - T_2 \leq \sum\limits_{i: 0 < t_i < T_1} (Q_i^{1,3} - Q_i^{1,2})t_i $\end{center}
\end{lemma}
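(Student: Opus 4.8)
The plan is to compute $E_{1,3}$ and $E_{1,2}$ explicitly as sums of path-length contributions over the bifurcations $b_i$, and then subtract. The key observation is that the expected distance $E_{j,k}$ decomposes additively over the branches (equivalently, over the bifurcations) of the tree: a path from a randomly chosen $T_j$-lineage to a randomly chosen $T_k$-lineage traverses a given bifurcation $b_i$ precisely when the two endpoints lie on opposite sides of $b_i$, and by the definition of $P_i^{j,k}$ the number of such (ordered appropriately) lineage-pairs is exactly $P_i^{j,k}$. First I would write
\begin{equation*}
E_{j,k} = \frac{1}{n_j n_k}\sum_{\text{pairs}} (\text{path length}),
\end{equation*}
and reorganize the double sum over pairs into a sum over bifurcations, so that the time interval between consecutive bifurcations is weighted by the fraction of sampled pairs whose connecting path crosses that interval. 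This is where the normalized quantities $Q_i^{j,k} = 2P_i^{j,k}/(n_j n_k)$ enter: $Q_i^{j,k}$ is precisely the probability that the random $T_j$--$T_k$ path passes through bifurcation $b_i$.

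The main computation is to express each $E_{j,k}$ as a weighted integral/sum of these crossing-probabilities against time. Concretely, the total path length from a $T_j$-lineage to a $T_k$-lineage equals $(T_j - t_{\mrca}) + (T_k - t_{\mrca})$ where $t_{\mrca}$ is the time of their most recent common ancestor; averaging over all pairs, the terms $T_j$ and $T_k$ contribute fixed constants (namely $T_j - t_0$ and $T_k - t_0$ summed appropriately), while the $-2 t_{\mrca}$ term, averaged, becomes $-\sum_i Q_i^{j,k} t_i$ after telescoping the mrca contributions across bifurcations. Carrying this through, I expect
\begin{equation*}
E_{j,k} = (T_j - t_0) + (T_k - t_0) - \sum_{i:\, 0 < t_i} Q_i^{j,k}\, t_i,
\end{equation*}
so that forming the difference $E_{1,3} - E_{1,2}$ cancels the $(T_1 - t_0)$ terms and leaves $(T_3 - t_0) - (T_2 - t_0) = T_3 - T_2$ from the constant part, minus $\sum_i (Q_i^{1,3} - Q_i^{1,2}) t_i$ from the mrca part.

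The crucial simplification, and the step I expect to be the main obstacle, is restricting the sum over $i$ to only those bifurcations with $0 < t_i < T_1$. For any bifurcation $b_i$ occurring at or after time $T_1$, a $T_1$-lineage descending into the subtree rooted at $b_i$ must lie entirely on one side of $b_i$ (since at time $T_1$ the split $b_i$ has not yet happened), which forces $n_{1,i}^l \cdot n_{1,i}^r = 0$ for the relevant side-counts and makes the contributions to $Q_i^{1,3}$ and $Q_i^{1,2}$ coincide in the difference. Thus $Q_i^{1,3} - Q_i^{1,2} = 0$ whenever $t_i \geq T_1$, because a $T_1$-endpoint cannot straddle such a bifurcation, so the two path-probabilities differ only through $T_3$- versus $T_2$-lineages that, past such a bifurcation, are counted identically relative to a fixed $T_1$-side. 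I would verify this cancellation carefully by a case analysis on the interval containing $t_i$, using the stated fact that bifurcations with an offspring branch dying out in their own interval have $P_i^{j,k} = 0$. Once the sum is pared down to $0 < t_i < T_1$, the claimed inequality $E_{1,3} \leq E_{1,2} \iff T_3 - T_2 \leq \sum_{i:\,0 < t_i < T_1}(Q_i^{1,3} - Q_i^{1,2}) t_i$ follows immediately by rearranging $E_{1,3} - E_{1,2} \leq 0$.
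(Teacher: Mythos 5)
Your strategy---expanding each $E_{j,k}$ as an average of path lengths, organizing that average according to the bifurcation at which each path turns around, and then subtracting---is essentially the paper's own approach, but the explicit formula you arrive at is false, and the error is exactly the case your ``telescoping'' step ignores. Not every pair of sampled lineages has its most recent common ancestor at a bifurcation: for the $n_k$ pairs in which the $T_k$-lineage is a direct descendant of the chosen $T_j$-lineage, the mrca is the $T_j$-sample point itself, at time $T_j$, not at any $t_i$. These pairs contribute $n_k T_j$ to $\sum_{\text{pairs}} t_{\mrca}$, so the correct identity is
\begin{equation*}
E_{j,k} \;=\; T_k + \frac{n_j-2}{n_j}\,T_j \;-\; \sum_{i:\,0<t_i<T_j} Q_i^{j,k}\,t_i ,
\end{equation*}
which is the paper's \eqref{expected13}, not your $E_{j,k}=T_j+T_k-\sum_i Q_i^{j,k}t_i$; your version is too large by $\tfrac{2}{n_j}T_j$. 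A quick sanity check: for a tree consisting of a single split at the root with no extinction ($n_j=n_k=2$), direct computation gives $E_{j,k}=T_k$, matching the corrected formula but not yours. In the paper's proof the case you dropped appears explicitly as the first summand $n_3(T_3-T_1)$ of \eqref{expected} (``every $T_3$-lineage has an ancestor in $T_1$''), combined with the counting identity $n_3+\sum_i P_i^{1,3}+P_0^{1,3}=n_1n_3$.

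Your final biconditional nonetheless comes out correct, but only by cancellation you did not argue for: the omitted term $\tfrac{2}{n_j}T_j$ depends solely on the earlier index $j$, which equals $1$ in both $E_{1,3}$ and $E_{1,2}$, so it disappears when you subtract. That is luck rather than proof: as written, your derivation asserts a false identity, and the same shortcut gives a genuinely wrong criterion in the setting of Lemma \ref{lemma2}, where $E_{1,3}$ is compared with $E_{2,3}$; there the earlier indices differ and the terms $\frac{n_1-2}{n_1}T_1$ and $\frac{n_2-2}{n_2}T_2$ survive. The repair is small: split the $n_jn_k$ pairs into the $n_k$ ancestor--descendant pairs (path length $T_k-T_j$) and, for each bifurcation $b_i$, the $P_i^{j,k}$ pairs straddling $b_i$ (path length $T_j+T_k-2t_i$), then use the counting identity and subtract as you did. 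Your observation that the sums restrict to $0<t_i<T_1$ is correct, though the reason is simply that the subtree rooted at a bifurcation with $t_i\geq T_1$ contains no $T_1$-lineages at all, i.e.\ $n_{1,i}^l=n_{1,i}^r=0$, rather than any one-sidedness condition.
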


\vspace{0.5cm}

\begin{proof} \begin{equation} \label{expected} E_{1,3} = \frac{1}{n_1n_3} \left( \underbrace{n_3 \left( T_3 - T_1\right)}_{\mbox{\tiny every $T_3$-lineage} \atop {\mbox{\tiny has an ancestor} \atop \mbox{\tiny in $T_1$}}}+\hspace{-0.45cm}\sum\limits_{i: 0 < t_i < T_1} \hspace{-0.45cm} \left[P_i^{1,3}\left(T_3-T_1+2\left(T_1-t_i \right) \right)\right] + \underbrace{P_0^{1,3} \left( T_3 + T_1 \right)}_{\mbox{\tiny ways along the root}}\right) \end{equation}

\par\vspace{0.5cm}
In the above bracket, the three summands refer to different paths from time $T_1$ to time $T_3$. The first summand belongs to those paths that go directly from $T_1$ to $T_3$ and thus have length $T_3-T_1$. There are $n_3$ such ways as every $T_3$-lineage  has an ancestor in $T_1$. The second summand sums up all paths going along one of the bifurcations $b_i$ for $i \neq 0$. For every $i$, there are by definition exactly $P_i^{1,3}$ such paths. Similarly, the third summand refers to all paths along the root $b_0$, whose length is determined by taking the distance from $T_1$ to the root plus the distance from there to $T_3$.
\par \vspace{0.5cm}

As there are altogether $n_1n_3$ different paths from $T_1$ to $T_3$
in the tree, we have: \begin{equation} \label{NumberOfWays} n_3 +
\sum\limits_{i: 0< t_i< T_1}P_i^{1,3} + P_0^{1,3} = n_1n_3.
\end{equation} Then, by (\ref{expected}) and (\ref{NumberOfWays}),
we get \begin{equation*}E_{1,3}= \frac{1}{n_1} \cdot \frac{1}{n_3}
\cdot \left(n_1n_3 T_3 + (n_1n_3 - 2 n_3) T_1 - 2 \cdot
\sum\limits_{i:0< t_i < T_1} P_i^{1,3} t_i \right), \end{equation*}
\par and thus \begin{equation} \label{expected13}
 E_{1,3}= T_3 + \frac{n_1-2}{n_1} T_1 - \sum\limits_{i:0< t_i< T_1} Q_i^{1,3}t_i. \end{equation}
Analogously, \begin{equation} \label{expected12} E_{1,2}= T_2 +
\frac{n_1-2}{n_1} T_1 - \sum\limits_{i:0< t_i< T_1} Q_i^{1,2}t_i.
\end{equation} Thus, with (\ref{expected13}) and (\ref{expected12}),
we can conclude: \par \vspace{0.5cm}

$
\begin{array}{ccccc}
E_{1,3} \leq E_{1,2} & \Leftrightarrow & T_3 - \sum\limits_{i:0< t_i< T_1} Q_i^{1,3}t_i & \leq &
T_2 - \sum\limits_{i:0< t_i< T_1} Q_i^{1,2}t_i, \\

&&&\\

& \Leftrightarrow & T_3 - T_2 & \leq & \sum\limits_{i: 0 < t_i < T_1} (Q_i^{1,3} - Q_i^{1,2})t_i.\\
&&&

\end{array}
$

\end{proof}

\par

\begin{corollary} \label{cor1} For a given tree there exist times $0 < T_1 < T_2 < T_3$ such that $E_{1,3} \leq E_{1,2}$ if and only if
$\sum\limits_{i: 0 < t_i < T_1}  (Q_i^{1,3} - Q_i^{1,2})t_i  > 0$. \end{corollary}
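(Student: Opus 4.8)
The plan is to deduce the corollary directly from Lemma~\ref{lemma1} by reading both sides of the claimed equivalence as existential statements over the freely chosen sampling times $0<T_1<T_2<T_3$. Write $S:=\sum\limits_{i:0<t_i<T_1}(Q_i^{1,3}-Q_i^{1,2})t_i$ for the right-hand sum. The useful structural fact, which I would isolate first, is that $S$ is a \emph{discrete} quantity while the gap $T_3-T_2$ is continuous: the index set $\{i:0<t_i<T_1\}$, the lineage counts $n_1,n_2,n_3$, and the one-sided descendant counts $n_{j,i}^l,n_{j,i}^r$ entering $P_i^{j,k}$ (hence $Q_i^{j,k}$) all depend on the times only through which inter-bifurcation interval each of $T_1,T_2,T_3$ lands in, so $S$ is constant as each $T_j$ ranges over its interval.

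The forward implication is immediate. If some choice of times gives $E_{1,3}\le E_{1,2}$, then by Lemma~\ref{lemma1} we have $T_3-T_2\le S$; since $T_2<T_3$ the left-hand side is strictly positive, so the same times witness $S>0$.

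For the converse, suppose a configuration attains $S=c>0$. The natural move is to keep the three intervals fixed and shrink $T_3-T_2$ below $c$, after which Lemma~\ref{lemma1} yields $E_{1,3}<E_{1,2}$. First note that $c>0$ forces the intervals of $T_2$ and $T_3$ to be distinct, since if they coincide then $n_2=n_3$ and $P_i^{1,2}=P_i^{1,3}$ for every $i$, whence $S=0$. When these two intervals are \emph{adjacent}, letting $T_2$ and $T_3$ approach their shared bifurcation time from opposite sides drives $T_3-T_2$ to $0$ while $c$ is unchanged, and we are done.

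The step needing care, and the main obstacle, is the non-adjacent case: if bifurcations intervene between the intervals of $T_2$ and $T_3$, then sliding $T_2,T_3$ within their own intervals only lowers $T_3-T_2$ to the combined length of the intervening intervals, which may exceed $c$. To close this, I would reduce to a configuration in which $T_2$ and $T_3$ straddle a single bifurcation $t^{*}>T_1$. Lowering $T_3$ one interval at a time, the adjacent configuration has $n_3=n_2+1$, the extra $T_3$-lineage being the split at $t^{*}$, which lies wholly on one side of each $b_i$ with $t_i<T_1$; consequently $P_i^{1,3}-P_i^{1,2}$ equals a single explicit increment, namely the opposite-side count (so $n_{1,i}^r$ when $t^{*}$ descends to the left of $b_i$, and $n_{1,i}^l$ otherwise). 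Using this increment to control how $S$ evolves as $T_3$ is brought down, I would show that a configuration realizing $c>0$ can be replaced by an adjacent one whose sum is still positive; the gap can then be made smaller than that positive value and Lemma~\ref{lemma1} applies. Establishing that positivity survives the reduction to adjacency is where the real work lies.
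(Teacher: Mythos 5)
Your forward implication is fine (it is the paper's argument read contrapositively), and your structural observation that the sum $S=\sum_{i:0<t_i<T_1}(Q_i^{1,3}-Q_i^{1,2})t_i$ depends on the times only through which inter-event intervals $T_1,T_2,T_3$ occupy is correct, as is your existential reading of the statement. The genuine gap is in the converse, exactly where you flag it, and the plan you sketch aims at a claim that is \emph{false} in general. To see this, write $f(T):=\sum_{i:0<t_i<T_1}\frac{2P_i^{1,T}}{n_1 n_T}\,t_i$ for $T>T_1$, where $P_i^{1,T}$ and $n_T$ are computed with a sample at time $T$ in place of $T_3$, so that $S=f(T_3)-f(T_2)$. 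When $T$ crosses a bifurcation $t^*$ from below to above, $n_T$ becomes $n_T+1$ and each $P_i^{1,T}$ becomes $P_i^{1,T}+\delta_i$, where $\delta_i$ is your opposite-side count if the splitting lineage descends from $b_i$ and $\delta_i=0$ otherwise; the jump of $f$ is then $\frac{2}{n_1 n_T(n_T+1)}\bigl(n_T\sum_i\delta_i t_i-\sum_i P_i^{1,T}t_i\bigr)$, which can have either sign: it is non-positive (strictly negative as soon as some $P_i^{1,T}t_i>0$) whenever the splitting lineage meets no $b_i$ with $0<t_i<T_1$, and it is positive when that lineage descends from a $b_i$ whose opposite side carries enough $T_1$-lineages. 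Hence, as you lower $T_3$ interval by interval, the running value of $S$ can hit zero or go negative and only later recover; nothing forces positivity to survive each step, so ``positivity survives the reduction to adjacency'' cannot be established --- one can build small explicit trees where it fails at the very first step.

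The missing idea is to let a telescoping (discrete intermediate-value) argument choose the straddled event, rather than marching $T_3$ down to $T_2$. Since $f$ is a step function whose only discontinuities in $(T_2,T_3)$ are the finitely many event times there, the hypothesis $c=f(T_3)-f(T_2)>0$ says the jumps at those events sum to $c$; hence at least one event $t^*\in(T_2,T_3)$ carries a strictly positive jump $\delta$. Keep $T_1$ fixed and set $T_2':=t^*-\epsilon$, $T_3':=t^*+\epsilon$ with $\epsilon>0$ small enough that $T_2'>T_1$, that no other event lies in $[T_2',T_3']$, and that $2\epsilon\le\delta$. For these times the sum appearing in Lemma~\ref{lemma1} equals $\delta\ge T_3'-T_2'$, so Lemma~\ref{lemma1} gives $E_{1,3}\le E_{1,2}$, completing the converse; note the positive jump need not sit adjacent to $T_2$'s interval, which is precisely why your descent fails while this argument succeeds. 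You should also know that the paper's own proof sidesteps none of this: it fixes $T_1<T_2$, defines $T_3:=T_2+\frac{1}{2}S$, and appeals to Lemma~\ref{lemma1}, thereby treating $S$ as a constant even though $Q_i^{1,3}$ depends on which interval the newly defined $T_3$ lands in --- by your own (correct) observation, that redefinition can change $S$, and indeed can land $T_3$ in $T_2$'s interval where the new sum is $0$. So your configuration-dependence worry identifies a real defect in the published argument as well; the telescoping step above is what is needed to make either proof rigorous.
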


\begin{proof}
If $\sum\limits_{i: 0 < t_i < T_1} (Q_i^{1,3} - Q_i^{1,2})t_i  \leq 0$, then by Lemma \ref{lemma1} we need $T_2 \geq T_3$ in order to get $E_{1,3} \leq E_{1,2}$. Hence, there are no values $0 < T_1 < T_2 < T_3$ such that $T_3-T_2$ fulfills the required condition, and so $E_{1,3} > E_{1,2}$ for all choices of $T_i$.
Conversely, suppose  $\sum\limits_{i: 0 < t_i < T_1}  (Q_i^{1,3} - Q_i^{1,2})t_i> 0$. Then, select $T_1, T_2$ with $0<T_1<T_2$ and set \begin{equation*} T_3 := \frac{1}{2} \cdot \sum\limits_{i: 0 < t_i < T_1}(Q_i^{1,3} - Q_i^{1,2}) t_i  + T_2 \end{equation*}
Then, $T_3>T_2$ and  \begin{equation*} T_3-T_2 = \frac{1}{2} \cdot \sum\limits_{i: 0 < t_i < T_1} (Q_i^{1,3} - Q_i^{1,2})t_i
\leq \sum\limits_{i: 0 < t_i < T_1} (Q_i^{1,3} - Q_i^{1,2})t_i .\end{equation*} By Lemma \ref{lemma1}, this choice of $0<T_1<T_2<T_3$ leads to $E_{1,3} \leq E_{1,2}$.

\end{proof}

\begin{corollary} \label{cor2} If either (i) $n_1 = 2$ or (ii) no extinction occurs in the tree and $n_2=n_3$, then $E_{1,3} > E_{1,2}$. \end{corollary}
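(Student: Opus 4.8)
The plan is to reduce both cases to the single inequality
\[
\sum_{i:\,0<t_i<T_1}\bigl(Q_i^{1,3}-Q_i^{1,2}\bigr)\,t_i \le 0,
\]
and then invoke Lemma~\ref{lemma1}. Since $T_2<T_3$ we always have $T_3-T_2>0$, so once the displayed sum is shown to be at most $0$ we obtain $T_3-T_2>\sum_i(Q_i^{1,3}-Q_i^{1,2})t_i$, which by Lemma~\ref{lemma1} is exactly the assertion $E_{1,3}>E_{1,2}$. Thus everything comes down to controlling the sign of each summand $Q_i^{1,3}-Q_i^{1,2}=\frac{2}{n_1}\bigl(\frac{P_i^{1,3}}{n_3}-\frac{P_i^{1,2}}{n_2}\bigr)$ for bifurcations strictly before $T_1$, and I expect to show it is in fact identically $0$ in both cases.

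For case (ii) I would argue as follows. With no extinction the lineage count is nondecreasing, so $n_2=n_3$ forces no bifurcation to occur in the interval $(T_2,T_3)$; hence every lineage present at $T_2$ continues unbranched to $T_3$. Consequently, for every bifurcation $b_i$ the left- and right-hand descendant counts agree at the two later times, $n_{2,i}^{l}=n_{3,i}^{l}$ and $n_{2,i}^{r}=n_{3,i}^{r}$, so that $P_i^{1,2}=P_i^{1,3}$; combined with $n_2=n_3$ this yields $Q_i^{1,2}=Q_i^{1,3}$ for every $i$, and the sum is $0$. For case (i) the key structural fact is that the two $T_1$-lineages possess a unique most recent common ancestor $b_m$. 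I would show that any bifurcation $b_i$ with $0<t_i<T_1$ other than $b_m$ has one of its two sides carrying no $T_1$-descendant; since $T_2,T_3>T_1$, every lineage seen at $T_2$ or $T_3$ was already present at $T_1$, so that side carries no relevant descendant at all and hence $P_i^{j,k}=0$. For $b_m$ itself each side carries exactly one $T_1$-lineage while all $n_2$ (respectively $n_3$) lineages at $T_2$ (respectively $T_3$) lie in its subtree, giving $P_m^{1,2}=n_2$ and $P_m^{1,3}=n_3$, so that $Q_m^{1,2}=Q_m^{1,3}$ because $n_1=2$. Hence every surviving term again vanishes.

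The main obstacle is the bookkeeping in case (i): one must rule out \emph{hidden} contributions from bifurcations where one branch goes extinct between $T_1$ and a later sampling time (handled by the observation that a descendant at $T_2$ or $T_3$ forces a descendant at $T_1$ on the same side), and one must correctly treat the boundary situation in which $b_m$ coincides with the root $b_0$, so that $t_m=0$ and $b_m$ drops out of the sum altogether. Neither point is deep, but both must be verified to be certain the sum is genuinely $0$ rather than merely nonpositive; in both cases this then gives $0<T_3-T_2$ and the conclusion $E_{1,3}>E_{1,2}$ follows from Lemma~\ref{lemma1}.
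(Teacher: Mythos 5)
Your proposal is correct and takes essentially the same approach as the paper: both reduce each case to showing $\sum_{i:\,0<t_i<T_1}(Q_i^{1,3}-Q_i^{1,2})t_i = 0$ and then conclude via Lemma~\ref{lemma1} (the paper routes this through Corollary~\ref{cor1}, which is the same observation you make about $T_3-T_2>0$). Your direct structural computation of $P_m^{1,2}=n_2$ and $P_m^{1,3}=n_3$ at the most recent common ancestor of the two $T_1$-lineages matches what the paper obtains from its path-counting identity, and your extra care about extinct side-branches and the case where that ancestor is the root only makes explicit what the paper leaves implicit.
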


\begin{proof}
\begin{enumerate}[(i)]
\item Note that if $n_1=2$, obviously only one bifurcation, say $b_{\hat{i}}$ (for some $\hat{i}$ such that $0\leq t_{\hat{i}}<T_1$),
contributes to the number $n_1$ of lineages at time $T_1$, all the
branches added by additional bifurcations become extinct before
$T_1$. Thus: $P_{\hat{i}}^{1,3}, P_{\hat{i}}^{1,2} \neq 0$ and
$P_{i}^{1,3}, P_{i}^{1,2} = 0 \hspace{0.1cm} \mbox{for all}
\hspace{0.1cm} i \neq \hat{i}$.

\noindent Analogously to the proof of Lemma \ref{lemma1} we have for
$n_1=2$: $n_1n_3=2n_3=n_3+P_{\hat{i}}^{1,3}$ and
$n_1n_2=2n_2=n_2+P_{\hat{i}}^{1,2}$. Thus, $n_2=P_{\hat{i}}^{1,2}
\hspace{0.2cm} \mbox{and} \hspace{0.2cm} n_3=P_{\hat{i}}^{1,3}$.
Therefore, $Q_{\hat{i}}^{1,2}=Q_{\hat{i}}^{1,3}=\frac{2}{n_1}$ and
$Q_{i}^{1,2}=Q_{i}^{1,3}=0 \hspace{0.1cm} \mbox{for all}
\hspace{0.1cm} i \neq \hat{i}$. Thus, $\sum\limits_{i: 0 < t_i <
T_1} (Q_i^{1,3} - Q_i^{1,2})t_i =0$ and it follows with Corollary
\ref{cor1} that $E_{1,3} > E_{1,2}$.

\item In this case, obviously $Q_i^{1,2}=Q_i^{1,3} \hspace{0.2cm} \mbox{for all} \hspace{0.2cm} i: 0 < t_i < T_1$ and therefore \\ $\sum\limits_{i: 0<t_i<T_1} (Q_i^{1,3} - Q_i^{1,2})t_i= 0$. Thus, by Corollary \ref{cor1},  $E_{1,3} > E_{1,2}$. \end{enumerate}
\end{proof}

Lemma \ref{lemma1} essentially states that the expected degree of
relatedness from taxa of time $T_1$ to taxa of time $T_3$ can be
larger than the one to taxa of time $T_2$, but it requires the
distance from $T_2$ to $T_3$ to be ``small enough''. Whether such a
solution is feasible can be checked via Corollary \ref{cor1}. Lemma
\ref{lemma1} shows already how the role of intermediates depends on
the times the fossils are taken from. Corollary \ref{cor2}(i) on the
other hand shows how the tree itself has an impact on the expected
values: if the tree shape (including branch lengths) is such that at
time $T_1$ only two taxa exist, then the just mentioned scenario
cannot happen as the condition of Corollary \ref{cor1} is not
fulfilled.

\par However, we can prove an even stronger result,
namely that not only $E_{1,3} < E_{1,2}$ is possible, but $E_{1,3} <
\min\{E_{1,2},E_{2,3}\}$ can be obtained for a suitable choice of
times $T_1, T_2, T_3$. For this, we need the following lemma.
\par\vspace{0.5cm}

\begin{lemma}
\label{lemma2} Given a rooted binary phylogenetic tree with times $0 < T_1 < T_2 < T_3$ and the root at time $t_0=0$. Then  $E_{1,3} \leq E_{2,3}$ if and only if
\begin{equation*}
\frac{n_2-2}{n_2}T_2 - \frac{n_1-2}{n_1} T_1 \hspace{0.2cm} \geq \hspace{0.2cm} \sum\limits_{i: 0 < t_i < T_1} (Q_i^{2,3} - Q_i^{1,3}) t_i +  \sum_{i: T_1 < t_i < T_2} Q_i^{2,3} t_i \end{equation*}

\end{lemma}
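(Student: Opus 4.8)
The plan is to mirror the derivation in the proof of Lemma~\ref{lemma1}: I will first produce a closed form for $E_{2,3}$ completely analogous to the expression (\ref{expected13}) obtained there for $E_{1,3}$, and then simply compare the two. The only new ingredient is that the bifurcations relevant to $E_{2,3}$ range over a larger index set.

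First I would write down $E_{2,3}$ by classifying the paths from a $T_2$-lineage to a $T_3$-lineage according to the bifurcation at which the path stops ascending and begins descending, i.e.\ the most recent common ancestor of the two sampled lineages. Since the $T_2$-sample already sits at time $T_2$, every such turning point is the root $b_0$ or a bifurcation $b_i$ with $t_i < T_2$; hence the relevant index set is now $\{i : 0 < t_i < T_2\}$ rather than $\{i : 0 < t_i < T_1\}$. Exactly as in the discussion following (\ref{expected}), a direct path has length $T_3 - T_2$ and there are $n_3$ of them (every $T_3$-lineage has a unique ancestor at $T_2$), a path turning at $b_i$ has length $(T_2 - t_i) + (T_3 - t_i) = T_3 - T_2 + 2(T_2 - t_i)$, and a root path has length $T_2 + T_3$. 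This gives
\[
n_2 n_3\, E_{2,3} = n_3(T_3 - T_2) + \sum_{i: 0 < t_i < T_2} P_i^{2,3}\bigl(T_3 - T_2 + 2(T_2 - t_i)\bigr) + P_0^{2,3}(T_3 + T_2).
\]

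Next I would invoke the counting identity $n_3 + \sum_{i: 0 < t_i < T_2} P_i^{2,3} + P_0^{2,3} = n_2 n_3$, obtained exactly as (\ref{NumberOfWays}) by partitioning the $n_2 n_3$ pairs of lineages according to their turning point, to eliminate the coefficient of $T_3$ and to simplify the coefficient of $T_2$. Collecting terms yields $n_2 n_3 E_{2,3} = n_2 n_3 T_3 + (n_2 n_3 - 2n_3)T_2 - 2\sum_{i: 0 < t_i < T_2} P_i^{2,3} t_i$, and after dividing by $n_2 n_3$ and recalling the definition of $Q_i^{2,3}$,
\[
E_{2,3} = T_3 + \frac{n_2 - 2}{n_2} T_2 - \sum_{i: 0 < t_i < T_2} Q_i^{2,3} t_i.
\]
Finally I would subtract. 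Using (\ref{expected13}) for $E_{1,3}$, the $T_3$ terms cancel in $E_{1,3} \leq E_{2,3}$, leaving $\frac{n_2-2}{n_2}T_2 - \frac{n_1-2}{n_1}T_1 \geq \sum_{i: 0 < t_i < T_2} Q_i^{2,3} t_i - \sum_{i: 0 < t_i < T_1} Q_i^{1,3} t_i$. Splitting the first sum over the intervals $(0,T_1)$ and $(T_1,T_2)$ and combining the $(0,T_1)$-part with the $E_{1,3}$-sum produces exactly $\sum_{i: 0 < t_i < T_1}(Q_i^{2,3} - Q_i^{1,3})t_i + \sum_{i: T_1 < t_i < T_2} Q_i^{2,3} t_i$, which is the claimed inequality.

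The only step with genuine content, and hence the one to watch, is the derivation of the $E_{2,3}$ formula: I must ensure the path classification is exhaustive and non-overlapping and that the counting identity survives extinction. Both hold because every lineage present at $T_3$ has a well-defined ancestral lineage at each earlier time (so the direct paths are counted by $n_3$ and no turning point occurs after $T_2$), and because bifurcations whose offspring die out before contributing have $P_i^{2,3} = 0$ and drop out automatically, as noted just before Lemma~\ref{lemma1}. Everything after that is the same bookkeeping as in Lemma~\ref{lemma1}.
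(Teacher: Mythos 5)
Your proposal is correct and follows the paper's own route: the paper likewise derives $E_{2,3}= T_3 + \frac{n_2-2}{n_2} T_2 - \sum_{i:0< t_i< T_2} Q_i^{2,3}t_i$ as the analogue of the $E_{1,3}$ formula from Lemma~\ref{lemma1} (your path classification and counting identity are exactly the details the paper compresses into ``analogously''), and then compares the two expressions and splits the sum over $(0,T_1)$ and $(T_1,T_2)$ just as you do.
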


\vspace{0.5cm}

\begin{proof}

As in the proof of Lemma \ref{lemma1}, we have ({\em cf.}
(\ref{expected13}))
\begin{equation} E_{1,3}= T_3 + \frac{n_1-2}{n_1} T_1 - \sum\limits_{i:0< t_i< T_1} Q_i^{1,3}t_i. \end{equation}

\begin{equation} \label{expected23} \mbox{Analogously,} \hspace{0.3cm}
E_{2,3}= T_3 + \frac{n_2-2}{n_2} T_2 - \sum\limits_{i:0< t_i< T_2} Q_i^{2,3}t_i. \end{equation}

\begin{equation*} \mbox{Thus,} \hspace{0.3cm}
E_{1,3} \leq E_{2,3}\hspace{0.3cm} \mbox{if and only if} \end{equation*}
\begin{equation*} \frac{n_1-2}{n_1} T_1 - \sum\limits_{i:0< t_i< T_1} Q_i^{1,3}t_i \hspace{0.2cm} \leq \hspace{0.2cm} \frac{n_2-2}{n_2} T_2 - \sum\limits_{i:0< t_i< T_2} Q_i^{2,3}t_i, \end{equation*}
\par\vspace{0.05cm}
\begin{center} which holds precisely if \end{center}

\begin{equation*}
\frac{n_2-2}{n_2}T_2 - \frac{n_1-2}{n_1}T_1 \hspace{0.2cm} \geq \hspace{0.2cm} \sum\limits_{i: 0 < t_i < T_1} (Q_i^{2,3} - Q_i^{1,3}) t_i +  \sum_{i: T_1 < t_i < T_2} Q_i^{2,3} t_i. \end{equation*}

\end{proof}
With the help of the two lemmas we can now state the following
theorem.

\begin{theorem}

Given a rooted binary phylogenetic tree with times $0 < T_1 < T_2 < T_3$ and the root at time $0$. Then, $E_{1,3} \leq \min\{E_{1,2},E_{2,3}\}$ if and only if the following two conditions hold:

\begin{enumerate}[(i)]
\item  \begin{center} $T_3 - T_2 \leq \sum\limits_{i: 0 < t_i < T_1} (Q_i^{1,3} - Q_i^{1,2})t_i $,
\end{center}

\item \begin{center} $ \frac{n_2-2}{n_2}T_2 - \frac{n_1-2}{n_1}T_1 \hspace{0.2cm} \geq \hspace{0.2cm} \sum\limits_{i: 0 < t_i < T_1} (Q_i^{2,3} - Q_i^{1,3}) t_i +  \sum\limits_{i: T_1 < t_i < T_2} Q_i^{2,3} t_i.  $\end{center}

\end{enumerate}

\end{theorem}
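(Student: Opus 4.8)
The plan is to reduce the theorem directly to the two preceding lemmas, using the elementary fact that a real number lies below the minimum of two others precisely when it lies below each of them separately. Concretely, I would first record the equivalence
\[
E_{1,3} \leq \min\{E_{1,2}, E_{2,3}\} \quad \Longleftrightarrow \quad \left( E_{1,3} \leq E_{1,2} \ \text{and} \ E_{1,3} \leq E_{2,3} \right),
\]
which holds for any three real numbers and requires no hypothesis on the tree.

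Next I would invoke Lemma \ref{lemma1} to rewrite the first conjunct: that lemma states that $E_{1,3} \leq E_{1,2}$ holds if and only if $T_3 - T_2 \leq \sum_{i: 0 < t_i < T_1} (Q_i^{1,3} - Q_i^{1,2}) t_i$, which is exactly condition (i). Likewise I would invoke Lemma \ref{lemma2} to rewrite the second conjunct: $E_{1,3} \leq E_{2,3}$ holds if and only if $\frac{n_2-2}{n_2} T_2 - \frac{n_1-2}{n_1} T_1 \geq \sum_{i: 0 < t_i < T_1} (Q_i^{2,3} - Q_i^{1,3}) t_i + \sum_{i: T_1 < t_i < T_2} Q_i^{2,3} t_i$, which is exactly condition (ii). Combining these two biconditionals through the reduction above yields the claim.

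Since both lemmas are already established, there is no substantive obstacle remaining; the only care needed is to check that the index ranges and the $Q$-expressions appearing in the lemma statements match verbatim those in conditions (i) and (ii), so that no extra simplification or case analysis sneaks in. In this sense the real content of the theorem was already carried by the two lemmas -- in particular by the separate closed-form expressions \eqref{expected13}, \eqref{expected12} and \eqref{expected23} for the three expected distances -- and the theorem is simply their conjunction.
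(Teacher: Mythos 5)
Your proposal is correct and is essentially the paper's own proof: the authors simply state that the theorem ``follows directly from Lemmas \ref{lemma1} and \ref{lemma2},'' which is exactly your decomposition of $E_{1,3} \leq \min\{E_{1,2},E_{2,3}\}$ into the two separate inequalities handled by those lemmas. You have merely made explicit the trivial logical step (a number is at most the minimum of two others if and only if it is at most each) that the paper leaves implicit.
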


\vspace{0.5cm}

\begin{proof} The Theorem follows directly from Lemmas \ref{lemma1} and \ref{lemma2}. \end{proof}

 \par \vspace{0.5cm}
The following example demonstrates the influence of times $0<T_1<T_2<T_3$ according to the above theorem.

\begin{example}
\label{surprise}
Consider again Figure \ref{ex1}.
\begin{enumerate}
\item Assume $t_1=15, T_1=100, t_2=107, t_3=109, T_2=110, T_3=130$. Then, $E_{1,2}=137.33$, $E_{2,3}=155.28$ and $E_{1,3}=155.33$. Hence, for this choice of times, we have $E_{1,3} > \max\{E_{1,2},E_{2,3}\}$.
\item Consider the same times as in the previous case, but choose $T_2=129$ instead of $T_2=110$. This means to move $T_2$ further away from $T_1$ and closer to $T_3$. This change is enough to give completely different expected values: $E_{1,2}=156.33$, $E_{2,3}=166.68$ and $E_{1,3}=155.33$. Hence, for this choice of times, we have $E_{1,3} < \min\{E_{1,2},E_{2,3}\}$.

\end{enumerate}
\end{example}

\par\vspace{0.5cm}

\section{Discussion}
The analysis of the fossil record provides an insight into the
history of species and thus into evolutionary processes. Stochastic models can provide a useful way to infer patters of diversification, and they form a useful link between molecular phylogenetics and paleontology \cite{nee}. Such models would greatly benefit from incorporation of potential fossil ancestors and other extinct data points to infer patterns of evolution. In this paper we have applied a simple model-based phylogenetic approach to study the expected degree of similarity between fossil taxa sampled at intermediate times.

\par `Gaps' in the fossil record are problematic
\cite{schoch} as they can be interpreted as `missing links'.
Therefore, numerous studies concerning the adequacy of the fossil
record have been conducted (see, for example, \cite{durham},
\cite{newell}, \cite{valentine}), and it is frequently found that
even the available fossil record is still incompletely understood.
This is particularly true for ancestor-descendant relationships
(see, for instance,  \cite{engelmann}, \cite{foote}). For example
Foote \cite{foote} reported the probability that a preserved and
recorded species has at least one descendant species that is also
preserved and recorded is on the order of 1\%-10\%. This number is
much higher than the number of identified ancestor-descendant pairs.
Thus, it remains an important challenge to recognize such pairs
\cite{alroy}. This is also essential with regard to
ancestor-intermediate-descendant triplets, as it is possible that
there are in fact fewer `gaps' than currently assumed, i.e. that
intermediates are present but not yet recognized. Such issues have
an important bearing on any conclusions our results might imply
concerning the testing of hypotheses of continuous morphological
evolution, or concerning the shape of the underlying evolutionary
tree based on the non-existence of certain intermediates.
\par

Another challenge is to investigate different phylogenetic models
for describing the expected degree of morphological separation
between different fossil taxa sampled at different times. Our
findings strongly depend on the assumption that morphological
diversification is proportional to the distance in the underlying
phylogenetic tree. This is justified if morphological difference is
proportional to the number of differing discrete characters, that
each of these characters changes at a constant rate over the time
period of sampling, and that homoplasy is rare. This last assumption
requires the rate of character change to be sufficiently small in
relation to the time period of the sampling -- the appearance of
reverse or convergent character states will lead to a more concave
(rather than linear) relationship between morphological divergence
and path distance.  A similar concave relationship might be expected
for continuous morphological evolution as described by neutral
Brownian-motion.

 Thus, the impact of different assumptions on the
role of intermediates could be further investigated. But even if we
assume that diversification is proportional to time, there may be
other ways to measure `distance' that could be usefully explored  --
for instance, one could define the distance between two taxa to be
the maximum (rather than the sum) of the two divergences times of
the taxa back to their most recent common ancestor. This definition
of distance allows the degree of relatedness to be higher for taxa
on the same clade than for other taxa. In this case, there exist analogous results to Lemmas \ref{lemma1} and \ref{lemma2} (results not shown), but the formulae are somewhat different, particularly for Lemma \ref{lemma2}.

\par \vspace{0.3cm}
\begin{acknowledgement}
{\rm We would like to thank Elliott Sober for bringing the
mathematical aspects of intermediates in the fossil record to our
attention, and for helpful comments. We also thank Matt Philips, 
David Penny and two anonymous reviewers for some helpful suggestions.}
\end{acknowledgement}

\vspace{0.5cm}

\end{document}